\def\thmheadbrackets#1#2#3{%
	\thmname{#1}\thmnumber{\@ifnotempty{#1}{ }\@upn{#2}}%
	\thmnote{ {\the\thm@notefont[#3]}}}
\newtheoremstyle{brakets}
{}
{}
{\itshape}
{}
{\bfseries}
{.}
{ }
{\thmheadbrackets{#1}{#2}{#3}}
\newtheoremstyle{defbrakets}
{}
{}
{\normalfont}
{}
{\bfseries}
{.}
{ }
{\thmheadbrackets{#1}{#2}{#3}}
\newtheoremstyle{defproblem}
{}
{}
{\normalfont}
{}
{\bfseries}
{.}
{ }
{\thmheadbrackets{#1}{#2}{#3}}
\theoremstyle{brakets}
\newtheorem{thm}{Theorem}
\theoremstyle{defbrakets}
\newtheorem{cor}{Corollary}
\newtheorem{defn}{Definition}
\newtheorem{prop}{Proposition}
\theoremstyle{defproblem}
\newtheorem{plm}{Problem}
\newtheorem{rem}{Remark}
\newcommand{\overbar}[1]{\mkern 1.5mu\overline{\mkern-1.5mu#1\mkern-1.5mu}\mkern 1.5mu}
\algnewcommand\algorithmicforeach{\textbf{for each}}
\algnewcommand\algorithmicinput{\textbf{Input:}}
\algnewcommand\INPUT{\item[\algorithmicinput]}
\algnewcommand\algorithmicoutput{\textbf{Output:}}
\algnewcommand\OUTPUT{\item[\algorithmicoutput]}
\let\oldReturn\Return
\renewcommand{\Return}{\State\oldReturn}
\newcommand{\algrule}[1][.2pt]{\par\vskip.5\baselineskip\hrule height #1\par\vskip.5\baselineskip}
\begin{document}

\title{Reconfigurable Timed Discrete-Event Systems}

\author{Matin Macktoobian}
\date{School of Engineering \\
Swiss Federal Institute of Technology in Lausanne (EPFL)\\
Lausanne, Switzerland \\
matin.macktoobian@epfl.ch}

\maketitle

\begin{textblock}{14}(2.5,1)
	\noindent\textbf{\color{red}Published in ``2020 24th International Conference on System Theory, Control and Computing (ICSTCC)''\\ DOI: 10.1109/ICSTCC50638.2020.9259790}
\end{textblock}

\begin{abstract}
In this paper, we present the first general solution to the automatic reconfiguration problem of timed discrete-event systems. We extend the recursive forcible backtracking approach which had been already solved the automatic reconfiguration problem of untimed discrete-event systems. In particular, we first solve the timed centralized reconfiguration problem using a specific timed eligibility set. Then, we study the identity between the solutions to an arbitrary timed centralized reconfiguration problem and its corresponding decentralized version. It turns out that the solutions to both cases are identical to each other. So, the solution obtained by the proposed theory is interestingly invariant to systematic distributions.
\end{abstract}
\textit{keywords}: Timed Backtracking Forcibility, Automatic Reconfiguration, Timed Discrete-Event Systems, Complex Systems
\section{Introduction}
\doublespacing
The efficient control of complex systems often depends on real-time systems requirements. In particular, dynamic reconfiguration is utmost of importance in the real-time control of complex systems. For example, real-time reconfiguration of networked data centers \cite{baccarelli2016energy} were studied to maximize the computing capacity and minimize the data loss in the course of reconfigurations. Similarly, modern manufacturing systems have been reconfigured based on the notion of Internet of Things \cite{bi2014internet}. Power systems also actively sought real-time reconfigurations \cite{rao2013power}. Intelligent approaches were also employed for temporal reconfiguration of microgrids' topological structures \cite{shariatzadeh2014real}. Middle-ware approaches to software reconfiguration \cite{oreizy1998role} and industrial supervisory systems \cite{pfitscher2013intelligent} were the other examples of the complex systems in which timed reconfiguration problem is taken into account.

The literature of real-time reconfiguration indicates that most of the contributions have addressed particular requirements of the timed reconfiguration problem. In other words, each solution addresses a restricted set of the requirements of real-time reconfiguration based on its intended application. Thus, existence of an all-in-one general solution which covers all (or at least majority of) those capabilities is an open problem. To propose such a relatively general solution, we first investigate the literature to identify the functionalities which are expected to exist in a real-time reconfigurable system. Subsequently, we introduce our approach to solve the reconfiguration problem of timed discrete-event systems (TDES) into which the required features are embedded.

Optimized reconfiguration \cite{vrba2010capabilities} is a fundamental requirement of complex systems with many components. One can optimize a particular reconfiguration temporally and/or spatially. Spatial optimization refers to obtaining the shortest operational path from one configuration to another in the case of untimed systems. Our method is spatially optimized since the system can choose the forcible path solving the problem with the shortest length. So, the desired untimed reconfiguration is performed by the minimum of operational steps. On the other hand, temporal optimization is directly related to timed reconfiguration scenarios. We later illustrate that our strategy optimizes the time corresponding to a reconfiguration by executing the fastest timed forcible path solving the intended timed reconfiguration.

Another optimization-related issue of real-time reconfiguration is the complexity of online computations corresponding to real-time optimizations. Namely, some reconfiguration techniques, e.g., \cite{zeller2012timing}, use optimizers. These optimizers satisfy a class of (in)equalities constrained to particular state of a system. Specifically, these optimizers minimize many parameters including the list of the required components to which the system has to migrate in the course of a reconfiguration. On the other hand, our scheme efficiently finds the optimal solution which is the forcible path with the minimum number of ticks, so the optimization is resolved without complex constraints.

TDES and supervisory control theory have also been used in some real-time reconfiguration applications in a limited manner, e.g., in task execution on processors. In particular, real-time scheduling on uni-processors were studied in \cite{wang2016dynamic}. As another example, \cite{chen2002real} investigated the non-preemptive execution supervision of periodic tasks on a processor in which interruptions are not allowed prior to the completion of task executions. These applications use traditional behavioral specifications of supervisory control theory to model reconfigurations. Expressed differently, the synthesized supervisors exclusively manage the behavioral reconfigurations. By contrast, our methodology implements both behavioral specifications and reconfiguration specification, so not only are the behavioral requirements of the system fulfilled, but also the planned reconfigurations are realized.

In this paper, we reformulate the untimed reconfiguration problem (\cite{macktoobian2017automatic,macktoobian2018automatic}) to establish the timed reconfiguration problem. Subsequently, we solve the problem using a timed recursive backtracking approach which yields the timed forcible paths solving the timed reconfiguration problem. Specifically, we synthesize a timed reconfiguration supervisor corresponding to a particular timed reconfiguration problem with respect to its reconfiguration specification. Then, we propose a dynamic-programming-based \cite{bellman1966dynamic} timed backtracking forcibility strategy to automatically solve the timed reconfiguration problem applicable to a variety of finite-state-automaton-based TDES. We apply temporal optimization to the solutions to timed reconfiguration problems to achieve the fastest solutions. We also partition the reconfiguration time associated with a particular timed reconfiguration problem to its constituents. Thus, the timing profiles of those constituents can be customized  according to any arbitrary reconfiguration requirement. As well, we solve the timed decentralized reconfiguration problem, then we clarify the relationships between an arbitrary timed centralized reconfiguration problem and its decentralized variant. We assert that this strategy is the first general solution to the automatic reconfiguration problem of TDES. 

This paper is structured as follows. Section \ref{sec:preliminaries} briefly introduces TDES. We solve the timed centralized reconfiguration problem in Section \ref{sec:cent}. In particular, Section \ref{subsec:RS-TCRS} demonstrates the overall approach to synthesizing the timed centralized reconfiguration supervisor (TCRS) into which the solutions to its associated timed reconfiguration problem are embedded. We define the timed reconfiguration problem and the the notion of timed backtracking forcibility in Sections \ref{subsubsec:problem} and \ref{subsubsec:TBFC}, respectively. Then, Section \ref{sec:decent} solves the timed decentralized reconfiguration problem. Section \ref{sec:example} solves the illustrated example in \cite{macktoobian2017automatic} in the timed case. We outline our accomplishments corresponding to the automatic reconfiguration of TDES in Section \ref{sec:conclusion}.
\vspace*{3mm}
\section{Preliminaries}
\label{sec:preliminaries}
Brandin-Wonham framework \cite{brandin1994supervisory} defined TDES by adjoining time bounds to transitions of untimed DES (UDES). Particularly, plant \textbf{G} starts from an untimed activity transition graph (ATG) $\textbf{G}_{\text{act}} := (A, \Sigma_{\text{act}}, \delta_{\text{act}}, a_{0}, A_{m})$ with $\Sigma := \Sigma_{\text{act}} \dot{\cup} \{tick\}$. Each $a \in A$ denotes an \textit{activity}. We have $\Sigma_{\text{act}} := \Sigma_{\text{spe}} \dot{\cup} \Sigma_{\text{rem}}$, where $\Sigma_{\text{spe}}$ (respectively, $\Sigma_{\text{rem}}$) is the \textit{prospective} (respectively, \textit{remote}) event set with \textit{finite} (respectively, \textit{infinite}) upper time-bounds \cite{wonham2017supervisory}. We define the \textit{timer interval} $T_{\sigma}$ for event $\sigma$ to be $[0,l_{\sigma}]$ and $[0,u_{\sigma}]$ for $\sigma \in \Sigma_{\text{rem}}$ and $\sigma \in \Sigma_{\text{spe}}$, respectively. Then, the \textit{initial state} is $q_{0} := (a_{0}, \{t_{\sigma_{0}}|\sigma \in \Sigma_{\text{act}}\})$, where $t_{\sigma_{0}}$ is $l_{\sigma}$ or $u_{\sigma}$ for a prospective or remote event, respectively. The \textit{marker state set} is $Q_{m} \subseteq A_{m} \times \prod\{T_{\sigma}|\sigma \in \Sigma_{\text{act}}\}$. Thus, a TDES is represented by $\textbf{G} := (Q, \Sigma, \delta, q_{0}, Q_{m})$. An event $\sigma \in \Sigma_{\text{act}}$ is \textit{eligible} at $q$, i.e., $\delta_{\text{act}}(a,\sigma)!$, if $\delta_{\text{act}}(a,\sigma)$ is defined; it is \textit{eligible}, i.e., $\delta(q,\sigma)!$, if $\delta(q,\sigma)$ is defined. Additionally, the \textit{closed behavior} and the \textit{marked behavior} of \textbf{G} are the languages
\begin{subequations}
	\begin{empheq}[left={}]{align}
		L(\textbf{G}) &:= \{s\in \Sigma^{*}|\delta(q_0,s)!\},
		\label{eq:1}\\
		L_{m}(\textbf{G}) &:= \{s\in L(\textbf{G})|\delta(q_0,s) \in Q_{m}\}.
		\label{eq:2}
	\end{empheq}
\end{subequations}
\textbf{G} is \textit{nonblocking} if $\overbar{L_{m}(\textbf{G})} = L(\textbf{G})$, where $\overbar{L_{m}(\textbf{G})}$ is the \textit{prefix closure} of $L_{m}(\textbf{G})$. The \textit{eligible event set} ${\text{Elig}}_{\textbf{G}}(s) \subseteq \Sigma$ at $q$ corresponding to $s \in L(\textbf{G})$ is associated with TDES \textbf{G} is ${\text{Elig}}_{\textbf{G}}(s) := \{\sigma \in \Sigma|s\sigma \in L(\textbf{G})\}$. Given arbitrary language $K \subseteq L(\textbf{G})$ and $s \in \overbar{K}$, ${\text{Elig}}_{K}(s) := \{\sigma \in \Sigma|s\sigma \in \overbar{K}\}$. Considering the set of all controllable sublanguages of $K$, denoted by $\mathcal{C}(K)$, $\text{sup}\mathcal{C}(K)$ represents its supremal element. Given specification $E \subseteq \Sigma^{*}$, there exists a \textit{monolithic supervisor} \textbf{S} such that
\begin{subequations}
	\begin{empheq}[left={}]{align}
		L_{m}(\textbf{S}) &:= \text{sup}\mathcal{C}(E \cap L_{m}(\textbf{G})),\label{eq:3}\\
		L(\textbf{S}) &:= \overbar{L_{m}(\textbf{S})}.
		\label{eq:4}
	\end{empheq}
\end{subequations}
\section{Centralized Reconfiguration of TDES}
\label{sec:cent}
\subsection{Timed Centralized Reconfiguration Supervisor synthesis}
\label{subsec:RS-TCRS}
The definitions of reconfiguration event and reconfiguration specification associated with timed reconfiguration problem exactly resemble the the definitions corresponding to the untimed centralized reconfiguration problem \cite{macktoobian2018automatic}.

To compute the TCRS associated with a particular timed reconfiguration problem, we start by composing all activity transition graphs (ATGs) corresponding to $n$ components of TDES \textbf{G} and reconfiguration specification \textbf{R} leading to the multimodal version of \textbf{G}, say \textbf{GMode}, as follows\footnote{A complete elaboration on the supervisory functions used in this section is presented in \cite{wonham2017supervisory}.}.
\begin{equation}
	\textbf{GMode}_{\text{ATG}} := \uline{\textbf{compose}}(\textbf{G}_{1,\text{ATG}}, \cdots, \textbf{G}_{n,\text{ATG}},\textbf{R})
\end{equation}
Then, the timed transition graph (TTG) associated with $\textbf{GMode}_{\text{ATG}}$ is computed as the following:
\begin{equation}
	\textbf{GMode}_{\text{TTG}} := \uline{\textbf{timed\_graph}}(\textbf{GMode}_{\text{ATG}}).
\end{equation}	
Next, the global system specification is defined by composing behavioral specification \textbf{E} and all events of $\textbf{GMode}_{\text{TTG}}$. Finally, the \textbf{TCRS} is computed as the following:
\begin{equation}
	\textbf{TCRS} := \uline{\textbf{supcon}} (\textbf{GMode}_{\text{TTG}}, [\uline{\textbf{allevents}}(\textbf{GMode}_{\text{TTG}})\parallel \textbf{E}]).
\end{equation}
We complete the design process by presenting an algorithm to solve the timed reconfiguration problem. We construct a backtracking algorithm that collects all timed forcible paths reaching a suitable target state of a \textbf{TCRS} from an arbitrary source state. To hit this mark, we first present the formal definition of the timed reconfiguration problem, then obtain the conditions for timed backtracking forcibility, and finally propose the algorithm solving the problem.
\subsection{Timed Backtracking Forcibility and Solvability Checking}
\label{subsec:solvability}
\subsubsection{Problem Statement}
\label{subsubsec:problem}
Similar to the untimed reconfiguration problem, we intend to trigger a desired reconfiguration upon its request at any state of a \textbf{TCRS}. A reconfigurable TDES takes a timed reconfiguration into account when the reconfiguration event associated with the timed reconfiguration occurs. Thus, we have to find a path (set) whose event(s) can be forced to occur, reaching a state at which the desired reconfiguration event is defined. The solvability of a timed reconfiguration problem is guaranteed if at least one forcible path can be found to activate the intended timed reconfiguration. Thus, the timed reconfiguration problem is defined as follows.
\begin{plm}[Timed Reconfiguration Problem]
	Denote by $q_s$ the state where \textbf{TCRS} currently resides, and let $q_r$ be a state at which a desired RE, say $\sigma_r$, is defined; namely 
	$\delta_\textbf{TCRS}(q_r,\sigma_r)!$  Subject to an appropriate specification of forcibility, determine the timed forcible path (set) from $q_s$ to $q_r$.
\end{plm}
\subsubsection{Timed backtracking Forcibility and Solvability Checking}
\label{subsubsec:TBFC}
In this section, we update the definitions corresponding to the recursive backtracking algorithm presented in \cite{macktoobian2018automatic}. So, the approach is efficiently applicable to the timed case.

First, the algorithm needs to keeps track of the timed forcible paths in the course of traversing the states of the timed centralized reconfiguration supervisor \textbf{TCRS}. So, Given the current state $q \in Q_{\textbf{TCRS}}$ corresponding to a timed backtracking problem, we define the notion of \textit{timed eligibility set} $\Lambda_q$ (with respect to $q$) as follows.
\begin{equation}
	\begin{split}
		\label{eq:eligT}
		\Lambda_{q} := \bigl\{ (q',\sigma) \mid &
		\overbrace{q' \in Q_{\textbf{TCRS}} \wedge \sigma \in \Sigma_{\textbf{TCRS}} \wedge \delta_{\textbf{TCRS}}(q',\sigma)=q}^{I}
		\wedge \\ & \hspace*{-2.7cm}\underbrace{[(\sigma \in \Sigma_{\mathrm{for}}) \vee (\forall \sigma' \in \Sigma_{\textbf{TCRS}}) (\delta_{\textbf{TCRS}}(q',\sigma')! \neq q \Rightarrow ( \sigma' \in \Sigma_{\mathrm{hib}}))]}_{II}
		\bigr\}
	\end{split}
\end{equation}
Here, $\Sigma_{\textbf{TCRS}} := \Sigma_{\text{for}} \cup \Sigma_{\text{hib}} \dot{\cup} \{tick\}$.

According to the subformula I, $q$ has to be one-step reachable from $q'$. The subformula II defines the required disablement or forcibility conditions for a successful backtracking to $q'$. In particular, if $\sigma$ is forcible, then it preempts uncontrollable event $\sigma'$ eligible at $q'$. $\Omega_{q}$ determines the state-event tuples corresponding to the forcible transitions to $q$ in the backtracking process.

The following definition extracts the first element of each $\Lambda_q$'s tuple, i.e., $q'$.
\begin{defn}[Selector Function]
	Let \textbf{TCRS} be a timed centralized reconfiguration supervisor. Given $q \in Q_{\textbf{TCRS}}$, let $\Lambda_q$ be an eligibility set including tuples $(q',\sigma)$ in which backtracking $q$ to $q' \in Q_{\textbf{TCRS}}$ is eligible via $\sigma \in \Sigma_{\textbf{TCRS}}$. Then, selector function $P_{1}: \Omega_{q} \rightarrow Q_{\textbf{TCRS}}$ is defined as follows
	\begin{equation}
		P_{1}(\Lambda_q) := \{q'|(\exists \sigma \in \Sigma_{\textbf{TCRS}}) (q',\sigma) \in \Lambda_q\}.
	\end{equation}
\end{defn}
Now, the definitions of backtracking forcibility tree (BFT) and proper backtracking forcibility tree (P-BFT) are adapted to the timed case as follows.
\begin{defn}[Backtracking Forcibility Tree]
	Assume a timed centralized reconfiguration problem associated with timed centralized reconfiguration supervisor \textbf{TCRS}. Let also $q_{s} \in Q_\textbf{TCRS}$ and $q_{r} \in Q_\textbf{TCRS}$ be the source state and the target state corresponding to the problem, respectively. Then, the \textit{backtracking forcibility tree} (BFT) corresponding to the problem is a tree whose root element is $q_r$ and the remainder of its nodes and links are recursively generated by the following node-link generator:
	\begin{equation}
		\label{eq:rbftT}
		\mathcal{T}(q,\Lambda_{q}) := 
		\begin{cases}
			\mathcal{T}(q,\emptyset) &\hspace*{-3mm}\text{if } q=q_{s} \text{          (terminal case)},\\
			{\bigcup}_{q' \in P_{1}(\Lambda_q)}\mathcal{T}(q',\Lambda_{q'}) &\hspace*{-3mm}\text{if } q \neq q_{s} \text{          (inductive case)}.\\
		\end{cases}
	\end{equation}
\end{defn}
\begin{defn}[Proper Backtracking Forcibility Tree]
	Given a timed centralized reconfiguration problem with respect to a timed centralized reconfiguration supervisor \textbf{TCRS}, a target state $q_r \in Q_{\textbf{TCRS}}$, and a source state $q_s \in Q_{\textbf{TCRS}}$, the BFT corresponding to the problem is \textit{proper} if all of its leaves are $q_s$.
\end{defn}
Using the definition of attraction field (see, \cite{macktoobian2018automatic}), we extend the notions of forcible path, branching path, and direct path to their timed counterparts as below.
\begin{defn}[Timed Forcible Path (Set)]
	\label{defn:tfp}
	Let \textbf{TCRS} be a timed centralized reconfiguration supervisor associated with a particular timed reconfiguration problem with source state $q_{s} \in \Sigma_{\textbf{TCRS}}$ and target state $q_{r} \in \Sigma_{\textbf{TCRS}}$. Given the attraction field $\bm{Z}$ corresponding to the problem, the \textit{set of timed forcible paths} solving the problem is defined as follows.
	\begin{equation}
		\mathcal{P} := \{\pi \in \Sigma_{\textbf{TCRS}}^{*} | \overbrace{\delta_{\textbf{TCRS}}(q_{s},\pi) = q_{r}}^{I} \wedge \overbrace{\delta_{\textbf{TCRS}}(q_{s},\overline{\pi}) \subseteq \bm{Z}}^{II} \}
	\end{equation}
\end{defn}
\begin{defn}[Branching and Direct Paths]
	Let \textbf{TCRS} be a timed centralized reconfiguration supervisor associated with a particular timed reconfiguration problem with source state $q_{s} \in \Sigma_{\textbf{TCRS}}$ and target state $q_{r} \in \Sigma_{\textbf{TCRS}}$. Let $\omega \in \Sigma_{\textbf{TCRS}}^{*}$ be a timed forcible path, say, $\delta_{\textbf{TCRS}}(q_{s},\omega) = q_{r}$, i.e., $q_r$ is reachable from $q_s$. Then, given any forcible path $\pi \in \Sigma_{\textbf{TCRS}}^{*}$ such that $\pi \neq \omega$ and $\delta_{\textbf{TCRS}}(q_{s},\pi) = q_{r}$, $\pi$ is a \textit{branching path} and $\omega$ is a \textit{direct path} with respect to $\pi$ corresponding to the problem.
\end{defn}
Finally, we use the $\mathcal{FTP}$ algorithm \cite{macktoobian2018automatic} to define the timed reconfiguration solver ($\mathcal{TRS}$) algorithm 
\begin{equation}
	\label{eq:trscT}
	\mathcal{TRS}: Q_{\textbf{TCRS}} \times Q_{\textbf{TCRS}} \times \Sigma_{\textbf{TCRS}}^{*} \rightarrow \Sigma_{\textbf{TCRS}}^{*},
\end{equation}
as Algorithm \ref{alg:TRSC}.
\begin{algorithm}[t]
	\setstretch{1.3}
	\caption{Timed Reconfiguration Solver ($\mathcal{TRS}$)}
	\label{alg:TRSC}
	\begin{algorithmic}[1] 
		\INPUT 
		\Statex $q_r$ \Comment Target state
		\Statex $q_s$ \Comment Source state
		\Statex \textbf{TCRS} \Comment Timed centralized reconfiguration supervisor
		\OUTPUT
		\Statex $\mathcal{P}$ \Comment Timed forcible path set
		\algrule[1pt] 
		\State compute the BFT associated with \textbf{TCRS}, $q_r$, and $q_s$
		\State P-BFT $\leftarrow \mathcal{FTP}(\text{BFT})$ \Comment prune the BFT
		\State read the direct paths from the P-BFT
		\State include the branching paths (if any exists)
		\Return $\mathcal{P}$
	\end{algorithmic}
\end{algorithm}

Similar to the $\mathcal{URS}$ algorithm \cite{macktoobian2018automatic}, the $\mathcal{TRS}$ algorithm yields a non-empty timed forcible path set $\mathcal{P}$ if the desired timed reconfiguration problem is solvable. The proof of the correctness and the computational complexity of the $\mathcal{TRS}$ algorithm resemble those of the $\mathcal{URS}$ algorithm \cite{macktoobian2017automatic}.

As well, our backtracking process is invariant to the presence or the absence of tick. In view of the backtracking process, the tick projection of the timed centralized reconfiguration supervisor only shrinks the event set $\Sigma_{\textbf{TCRS}} := \Sigma_{\text{for}} \cup \Sigma_{\text{hib}} \dot{\cup} \{ tick\}$ to $\Sigma_{\textbf{TCRS}} := \Sigma_{\text{for}} \cup \Sigma_{\text{hib}}$, and this evolution changes neither the definition of the eligibility set nor the functionality of our backtracking process.
\subsubsection{Functional Compositional Computation of Tick-Projected Timed Forcible Paths}
\label{subsubsec:func-comm1}
A timed forcible path, which generally includes ticks, in fact represents a spatio-temporal dynamics in that the timed forcible path's tick substrings exhibit its temporal evolution, but its other events render the spatial (or operational) aspects of the solution. In this part, we are interested in the result of applying the tick-projection operator to a TCRS before and after solving the timed reconfiguration problem associated with it using the $\mathcal{TRS}$ algorithm. To this end, Proposition \ref{prop:commm} shows that the result will be the spatial projection associated with the intended timed forcible path regardless of the order of applying the $\mathcal{TRS}$ algorithm and the tick-projection operator to the timed centralized reconfiguration supervisor.

We recall that $\textit{project}: \Sigma_{\textbf{TCRS}}^{*} \rightarrow (\Sigma_{\textbf{TCRS}}\setminus\{tick\})^{*}$ projects out the ticks of a path (set). Additionally the signature of the $\mathcal{TRS}$ algorithm (see, (\ref{eq:trscT})) is $\mathcal{TRS}: Q_{\textbf{TCRS}} \times Q_{\textbf{TCRS}} \times \Sigma_{\textbf{TCRS}}^{*} \rightarrow \Sigma_{\textbf{TCRS}}^{*}$. We note that $(\Sigma_{\textbf{TCRS}}\setminus\{tick\})^{*} \subseteq \Sigma_{\textbf{TCRS}}^{*}$, so the \textit{project} and the $\mathcal{TRS}$ algorithm can be functionally composed in any order.
\begin{prop}
	\label{prop:commm}
	Let the strings corresponding to a timed forcible path (set) $\mathcal{P}$ be a solution (set) to a particular timed centralized reconfiguration problem associated with the timed centralized reconfiguration supervisor \textbf{TCRS} and source and target states $q_{s},q_{r} \in Q_{\textbf{TCRS}}$ with respect to  reconfiguration event $\sigma_r$. Given P$\mathcal{P}$ as the spatial projection of $\mathcal{P}$, the following equations hold.
	\begin{equation}
		\label{eq:projTRSC}
		\begin{split}
			&\textit{project}(\mathcal{TRS}(q_{s},q_{r},L(\textbf{TCRS})))  = \\ &\mathcal{TRS}(q_{s},q_{r},\textit{project}(L(\textbf{TCRS}))) = \text{P}\mathcal{P}
		\end{split}
	\end{equation}
\end{prop}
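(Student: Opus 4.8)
The plan is to establish the two equalities in (\ref{eq:projTRSC}) separately: the left-hand one by unfolding the relevant definitions, and the right-hand one by a commutation argument that rests on the tick-invariance of the backtracking process already asserted above.

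First I would settle the left-hand equality $\textit{project}(\mathcal{TRS}(q_s,q_r,L(\textbf{TCRS}))) = \text{P}\mathcal{P}$. By the specification of Algorithm \ref{alg:TRSC}, the invocation $\mathcal{TRS}(q_s,q_r,L(\textbf{TCRS}))$ returns exactly the timed forcible path set $\mathcal{P}$ of Definition \ref{defn:tfp}. Applying \textit{project} to $\mathcal{P}$ deletes every $tick$ from each path, and the outcome is by definition the spatial projection $\text{P}\mathcal{P}$. Hence this equality is immediate once the algorithm's output is identified with $\mathcal{P}$.

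The substantive part is the right-hand equality $\mathcal{TRS}(q_s,q_r,\textit{project}(L(\textbf{TCRS}))) = \text{P}\mathcal{P}$, stating that running the solver on the tick-projected supervisor reproduces the spatial projection. The plan is to trace the three structural stages of $\mathcal{TRS}$---construction of the BFT, pruning by $\mathcal{FTP}$ to the P-BFT, and extraction of the direct and branching paths---and to show that each stage commutes with \textit{project}. The core step is to verify that the eligibility set $\Lambda_q$ of (\ref{eq:eligT}) is preserved when $tick$ is removed from $\Sigma_{\textbf{TCRS}}$. Subformula $I$ records only one-step reachability in the supervisor's transition graph, while subformula $II$ refers solely to membership in $\Sigma_{\text{for}}$ and $\Sigma_{\text{hib}}$; since neither set contains $tick$, the forcibility/disablement predicate in $II$ is literally the same before and after the projection. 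Consequently the node-link generator $\mathcal{T}$, and hence the BFT and its pruned form, carry over once the ticks are stripped, after which the selector $P_1$ together with the path-reading steps produce precisely the tick-free images of the direct and branching paths, i.e. $\text{P}\mathcal{P}$.

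The main obstacle is the status of $tick$ inside subformula $II$, because $tick$ lies neither in $\Sigma_{\text{for}}$ nor in $\Sigma_{\text{hib}}$. The point to nail down is that whenever a state $q'$ emits a $tick$ to a state other than $q$, the disablement disjunct of $II$ fails, so any admissible backtracking edge into $q'$ must use a forcible $\sigma$; by the forcibility semantics of TDES such a $\sigma$ preempts $tick$, so the offending $tick$ transition never participates in an admissible backtracking edge. I would use this to argue that deleting the preempted $tick$ transitions leaves the family of backtracking edges unchanged, yielding a bijection between the edges of the original BFT and those computed over $\textit{project}(L(\textbf{TCRS}))$. With this bijection in hand, the commutation of all three algorithmic stages follows, establishing $\mathcal{TRS}(q_s,q_r,\textit{project}(L(\textbf{TCRS}))) = \text{P}\mathcal{P}$ and completing the chain of equalities.
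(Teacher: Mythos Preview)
Your approach differs substantially from the paper's. The paper does not trace the internal mechanics of $\mathcal{TRS}$ at all; it argues purely at the language level. For each order of composition it fixes an arbitrary $\pi\in\mathcal{P}$, uses $\pi\cdot\sigma_r\in L(\textbf{TCRS})$, and invokes the catenation-compatibility of the natural projection to obtain $\text{P}(\pi)\cdot\sigma_r\in L(\text{P}\textbf{TCRS})$, i.e.\ $\delta_{\text{P}\textbf{TCRS}}(\delta_{\text{P}\textbf{TCRS}}(q,\text{P}(\pi)),\sigma_r)!$. Both orderings are reduced to this same eligibility assertion, and the paper declares commutativity on that basis. Your structural route through $\Lambda_q$, the BFT, and $\mathcal{FTP}$ is considerably more fine-grained and comes closer to actually justifying the \emph{set equality} in (\ref{eq:projTRSC}): the paper's argument really only certifies that each $\text{P}(\pi)$ reaches a state of $\text{P}\textbf{TCRS}$ where $\sigma_r$ is enabled, not that running $\mathcal{TRS}$ on $\text{P}\textbf{TCRS}$ returns exactly $\text{P}\mathcal{P}$ and no spurious extra paths. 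Conversely, your bijection-of-edges plan needs additional care on two points that the paper's coarser argument sidesteps: tick projection may merge states of \textbf{TCRS}, so ``same BFT up to tick removal'' is not automatic; and the case where the backtracking edge itself is a tick, i.e.\ $(q',tick)\in\Lambda_q$ via the disablement disjunct of~$II$, is not covered by your preemption argument, which only handles ticks leaving $q'$ toward states other than $q$.
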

\begin{proof}
	Let P\textbf{TCRS} be the tick-projected version of \textbf{TCRS}. We need to show that both the functional compositions reach a state in P\textbf{TCRS} at which $\sigma_r$ is eligible to occur. Note that the operator $\text{P}(\cdot)$ is substituted for the function \textit{project} in the proof for brevity of notation.
	\leavevmode
	\begin{itemize}
		\item $project \circ \mathcal{TRS}$:\\
		Let $\pi \in \mathcal{P}$. By the definition of timed forcible path, we have $\pi \cdot \sigma_{r} \in L(\textbf{TCRS})$. By applying tick projection, we obtain $\text{P}(\pi \cdot \sigma_{r}) \in L(\text{P}\textbf{TCRS})$, i.e., $\text{P}(\pi)\cdot \text{P}(\sigma_{r}) \in L(\text{P}\textbf{TCRS})$, thereby $\text{P}(\pi)\cdot \sigma_{r} \in L(\text{P}\textbf{TCRS})$. Considering $q \in Q_{\text{P}\textbf{TCRS}}$ to be the current state of P\textbf{TCRS}, we have $(\exists q' \in Q_{\text{P}\textbf{TCRS}}) \delta_{\text{P}\textbf{TCRS}}(q,\text{P}(\pi)\cdot \sigma_{r})=q'$. Since $\delta_{\text{P}\textbf{TCRS}}(q,\text{P}(\pi)\cdot \sigma_{r})!$, we can conclude that $\delta_{\text{P}\textbf{TCRS}}(\delta_{\text{P}\textbf{TCRS}}(q,\text{P}(\pi)), \sigma_{r})!$
		
		\item $\mathcal{TRS} \circ project$:\\ 
		Let $\pi \in \mathcal{P}$. Since $\pi \in L(\textbf{TCRS})$, $\text{P}(\pi) \in L(\text{P}\textbf{TCRS})$, thereby $\delta_{\text{P}\textbf{TCRS}}(q,\text{P}(\pi))!$. Note that $\pi \cdot \sigma_{r} \in L(\textbf{TCRS})$ and since $\pi$ solves the timed reconfiguration problem in \textbf{TCRS}, then 
		$\text{P}(\pi \cdot \sigma_{r}) \in L(\text{P}\textbf{TCRS})$, so we have $\text{P}(\pi) \cdot \sigma_{r} \in L(\text{P}\textbf{TCRS})$. This clearly implies that $\delta_{\text{P}\textbf{TCRS}}(q,\text{P}(\pi)\cdot \sigma_{r})!$, that is, $\delta_{\text{P}\textbf{TCRS}}(\delta_{\text{P}\textbf{TCRS}}(q,\text{P}(\pi)), \sigma_{r})!$.
	\end{itemize} 
	Since applying both of the functional compositions to \textbf{TCRS} yields P$\mathcal{P}$, we conclude that the commutativity holds. 
\end{proof}
\begin{rem}
	\label{rem:efficient}
	We note that $\mathcal{TRS} \circ \textit{project}$ is computationally more efficient than $\textit{project} \circ \mathcal{TRS}$ since the former first eliminates the ticks of \textbf{TCRS}, so the $\mathcal{TRS}$ algorithm has to process a simpler supervisor with fewer number of transitions (and probably fewer states).
\end{rem}
\vspace*{4mm}
\section{Decentralized Reconfiguration of TDES}
\label{sec:decent}
We find that a timed reconfiguration problem can be solved by a particular timed forcible path (set) in both centralized and decentralized manners.
\begin{defn}[Decentralization Package]
	Let \textbf{G} be a TDES controlled by timed centralized reconfiguration supervisor \textbf{TCRS}. Given an event list EV based on which the localization is defined, the 3-tuple $\Delta := (\textbf{G},\textbf{TCRS}, \text{EV})$ is a \textit{decentralization package} with respect to \textbf{G}.
\end{defn}
\begin{rem}
	We denote the second element of $\Delta$ by $\Delta \bigg\rvert_{\textbf{TCRS}} := \textbf{TCRS}$.
\end{rem}
The following theorem formulates the solution equivalence between timed centralized and decentralized cases by a functional compositional law.
\begin{thm}
	\label{thm:comm2}
	Let $\Delta := (\textbf{G},\textbf{TCRS}, \text{EV})$ be a decentralization package with respect to the TDES \textbf{G}. Let $\mathcal{P}$ be the strings corresponding to the timed forcible path (set)  solving an arbitrary timed reconfiguration problem with respect to \textbf{TCRS}. Given source and target states $q_{s},q_{r} \in Q_{\textbf{TCRS}}$ associated with reconfiguration event $\sigma_r$, $\mathcal{P}$ also solves the timed reconfiguration problem with respect to $\textbf{TDRS} := \uline{\textbf{\text{timed}\_\text{localize}}}(\Delta)$. In other words, the following equalities hold hold for some $q',q'' \in Q_{\textbf{TDRS}}$.
	\begin{equation}
		\label{eq:comm2}
		\begin{split}
			\mathcal{TRS}(q_{s},q_{r},L(\Delta \bigg\rvert_{\textbf{TCRS}})) &=\\ \mathcal{TRS}(q',q'',\uline{\textbf{timed\_localize}} (\Delta)) &= \mathcal{P}
		\end{split}
	\end{equation}
\end{thm}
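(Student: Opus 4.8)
The plan is to mirror the two-direction argument of Proposition \ref{prop:commm}, replacing the tick-projection operator with the localization map $\uline{\textbf{timed\_localize}}$. The crucial fact I would invoke first is the control-equivalence property of localization: the decentralized supervisor $\textbf{TDRS} := \uline{\textbf{timed\_localize}}(\Delta)$ generates the same controlled behavior as the centralized supervisor, i.e., $L(\textbf{TDRS}) = L(\Delta \bigg\rvert_{\textbf{TCRS}})$ modulo the event list EV on which the localization is based. With this language equality in hand, I would carry out a containment argument in both directions to establish that the two invocations of the $\mathcal{TRS}$ algorithm in (\ref{eq:comm2}) produce the same path set $\mathcal{P}$.

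First I would fix an arbitrary $\pi \in \mathcal{P}$. By Definition \ref{defn:tfp} (subformula I), $\pi$ drives \textbf{TCRS} from $q_s$ to $q_r$ and $\sigma_r$ is eligible at $q_r$, so $\pi \cdot \sigma_r \in L(\textbf{TCRS})$. Applying the control-equivalence of localization gives $\pi \cdot \sigma_r \in L(\textbf{TDRS})$, whence there exist states $q', q'' \in Q_{\textbf{TDRS}}$ with $\delta_{\textbf{TDRS}}(q', \pi) = q''$ and $\delta_{\textbf{TDRS}}(q'', \sigma_r)!$. This shows that $\pi$ reaches, in \textbf{TDRS}, a state at which the reconfiguration event $\sigma_r$ is defined, which is exactly the target condition of the decentralized timed reconfiguration problem. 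The reverse direction is symmetric: any path solving the decentralized problem lifts back to a solution of the centralized problem via the same language equality.

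Beyond the reachability of $\sigma_r$, I would also need to verify that subformula II of Definition \ref{defn:tfp}, namely the attraction-field condition $\delta_{\textbf{TCRS}}(q_s, \overline{\pi}) \subseteq \bm{Z}$, is preserved under localization. Here I would appeal to the remark in Section \ref{subsubsec:TBFC} that the backtracking process leaves the eligibility set $\Lambda_q$, and hence the attraction field, unchanged: the partition $\Sigma_{\textbf{TCRS}} = \Sigma_{\text{for}} \cup \Sigma_{\text{hib}} \dot{\cup} \{tick\}$ into forcible and hibernated events is carried over by $\uline{\textbf{timed\_localize}}$, since localization reassigns control authority across local supervisors but never reclassifies an event's forcibility status. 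Consequently the disablement/forcibility clause of $\Lambda_q$ evaluates identically in \textbf{TCRS} and \textbf{TDRS} along every prefix of $\pi$, so attraction-field membership transfers and $\pi$ remains a timed forcible path in the decentralized setting.

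The main obstacle I anticipate is making the control-equivalence step rigorous in the timed setting. Localization in the untimed theory preserves the supervised language by construction, but here one must confirm that interleaving the $tick$ event with the local event sets does not desynchronize the local supervisors in a way that alters $L(\textbf{TDRS})$ relative to $L(\textbf{TCRS})$. Once that language equality is secured, the remainder is a routine repackaging of the Proposition \ref{prop:commm} argument, and the chain of equalities in (\ref{eq:comm2}) follows because both functional compositions terminate at the same reconfiguration-enabled state and therefore return the identical path set $\mathcal{P}$.
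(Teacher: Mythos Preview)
Your high-level strategy---transfer $\pi\cdot\sigma_r$ from $L(\textbf{TCRS})$ to $L(\textbf{TDRS})$ via the localization equivalence---is the same starting point as the paper's, but two details diverge.

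First, a gap: the control-equivalence you invoke is stated incorrectly. Timed supervisor localization does \emph{not} give $L(\textbf{TDRS}) = L(\textbf{TCRS})$; it gives the closed-loop identity $L(\textbf{G}) \cap L(\textbf{TDRS}) = L(\textbf{TCRS})$, from which one can only extract the inclusion $L(\textbf{TCRS}) \subseteq L(\textbf{TDRS})$. That inclusion is exactly what the paper uses, and it is enough to push $\pi\cdot\sigma_r$ into $L(\textbf{TDRS})$ for the forward direction. But your ``reverse direction is symmetric'' claim then collapses: with only an inclusion, a path that solves the decentralized problem need not lie in $L(\textbf{TCRS})$, so symmetry fails. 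The paper, in fact, never argues the reverse containment---it is content to show that every $\pi\in\mathcal{P}$ reaches a $\sigma_r$-enabled state in $\textbf{TDRS}$.

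Second, a genuine difference in route: you work monolithically with $\textbf{TDRS}$, while the paper decomposes it as $\textbf{TDRS} = \textbf{LOC}^{P} \parallel \textbf{LOC}^{C}$ (the localized tick-preemption controller and the localized event-disablement controller) and verifies $\sigma_r$-eligibility in each factor separately. The paper's extra step is to argue that $\sigma_r \in \Sigma_{\textbf{LOC}^{P}} \cap \Sigma_{\textbf{LOC}^{C}}$---because a reconfiguration event must be both prohibitable (hence visible to $\textbf{LOC}^{C}$) and forcible (hence visible to $\textbf{LOC}^{P}$)---and then to conclude $\delta_{\textbf{LOC}^{P}}(\delta_{\textbf{LOC}^{P}}(q_P,\text{P}_P(\pi)),\sigma_r)!$ and likewise for $\textbf{LOC}^{C}$. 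Your monolithic argument reaches the same endpoint more cheaply once the inclusion is in hand, but it hides the operational content: in a decentralized architecture each local controller must independently permit $\sigma_r$, which is precisely what the factor-by-factor verification establishes. Your attraction-field preservation paragraph is a reasonable addition that the paper simply omits.
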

\begin{proof}
	According to the timed supervisor localization theory, $L(\textbf{G}) \cap L(\textbf{TDRS}) = L(\textbf{TCRS})$, thus $L(\textbf{TCRS}) \subseteq L(\textbf{TDRS})$. Since $\mathcal{P} \subseteq L(\textbf{TCRS})$, we conclude that $\mathcal{P} \subseteq L(\textbf{TDRS})$. Let $Q_{\textbf{TCRS}}$ and $\delta_{\textbf{TCRS}}(\cdot,\cdot)$ be the state set and the transition function corresponding to \textbf{TCRS}, respectively. Hence, considering $q \in Q_{\textbf{TCRS}}$ as the currently occupied state of \textbf{TCRS}, $\mathcal{P}$ guarantees that $(\forall \pi \in \mathcal{P})\delta_{\textbf{TCRS}}(\delta_{\textbf{TCRS}}(q,\pi),\sigma_r)!$. Let also $Q_{\textbf{TDRS}}$ and $\delta_{\textbf{TDRS}}(\cdot,\cdot)$ be the state set and the transition function corresponding to \textbf{TDRS}, respectively. Let $\pi \in \mathcal{P}$ be the string corresponding to an arbitrary timed forcible path. Suppose $q' \in Q_{\textbf{TDRS}}$ is the current state of \textbf{TDRS}, and $\pi$ reaches $q'' \in Q_{\textbf{TDRS}}$ from $q'$, i.e., $\delta_{\textbf{TDRS}}(q',\pi) = q''$. We have to show that $\delta_{\textbf{TDRS}}(q'',\sigma_r)!$.
	
	By definition, we have $\textbf{TDRS}: = \textbf{LOC}^{P} \parallel \textbf{LOC}^{C}$, where $\textbf{LOC}^{P}$ and $\textbf{LOC}^{C}$ are the  global localized tick controller and the global localized event controller corresponding to \textbf{TDRS}, respectively. Let $\Sigma_{\textbf{LOC}^{C}_{\alpha}}$ and $\Sigma_{\textbf{LOC}^{P}_{\beta}}$ be the event sets of the localized event controller and the localized tick controller corresponding to events $\alpha$ and $\beta$, respectively; then event sets of $\textbf{LOC}^{P}$ and $\textbf{LOC}^{C}$ can be expressed as follows
	\begin{subequations}
		\begin{empheq}[left={\empheqlbrace\,}]{align}
			&\Sigma_{\textbf{LOC}^{P}} :=
			\bigcup_{\beta \in \Sigma_{\text{for}}} \Sigma_{\textbf{LOC}^{P}_{\beta}}, \\ &\Sigma_{\textbf{LOC}^{C}} :=
			\bigcup_{\alpha \in \Sigma_{\text{hib}}} \Sigma_{\textbf{LOC}^{C}_{\alpha}},
		\end{empheq}
	\end{subequations}
	\noindent where $\Sigma_{\textbf{TDRS}} := \Sigma_{\textbf{LOC}^{P}} \cup \Sigma_{\textbf{LOC}^{C}}$. Consequently, given the following inverse projection operators
	\begin{subequations}
		\begin{empheq}[left={\empheqlbrace\,}]{align}
			&\text{P}_{P}^{-1} :=
			\text{pwr}(\Sigma_{{\textbf{LOC}}^{P}}^{*}) \longrightarrow  \text{pwr}(\Sigma^{*}_{\textbf{TDRS}}),\\ &\text{P}_{C}^{-1} :=
			\text{pwr}(\Sigma_{{\textbf{LOC}}^{C}}^{*}) \longrightarrow  \text{pwr}(\Sigma^{*}_{\textbf{TDRS}}),
		\end{empheq}
	\end{subequations}
	since $\pi\cdot\sigma_r \in L(\textbf{TCRS})$, thereby $\pi\cdot\sigma_r \in L(\textbf{TDRS})$, i.e., $\pi\cdot\sigma_r \in L(\textbf{LOC}^{P} \parallel \textbf{LOC}^{C})$ implying that
	\begin{subequations}
		\begin{empheq}[left={\empheqlbrace\,}]{align}
			&\pi\cdot\sigma_r \in \text{P}_{P}^{-1}(\textbf{LOC}^{P}), \label{third:1}\\
			&\pi\cdot\sigma_r \in \text{P}_{C}^{-1}(\textbf{LOC}^{C}).\label{third:2}
		\end{empheq}
	\end{subequations}
	Let $q_P \in Q_{\textbf{LOC}^{P}}$ and $q_C \in Q_{\textbf{LOC}^{C}}$ be the states at which $\textbf{LOC}^{P}$ and $\textbf{LOC}^{C}$ currently reside, respectively; thus, we have
	\begin{subequations}
		\begin{empheq}[left={\empheqlbrace\,}]{align}
			&(\exists q_{C} \in Q_{\textbf{LOC}^{P}}) \delta_{\textbf{LOC}^{P}}(q_P, \text{P}_{P}(\pi\cdot\sigma_r)) = q_{C},\\
			&(\exists q_{P} \in Q_{\textbf{LOC}^{C}}) \delta_{\textbf{LOC}^{C}}(q_C, \text{P}_{C}(\pi\cdot\sigma_r)) = q_{P}.
		\end{empheq}
	\end{subequations}
	We observe that $\sigma_{r} \in \Sigma_{\textbf{LOC}^{C}}$, since $\sigma_{r} \in \Sigma_{\text{hib}}$ and it has to be controllable to be enabled whenever a reconfiguration is desired. Moreover, note that $\sigma_{r} \in \Sigma_{\textbf{LOC}^{P}}$, since it has to be able to preempt its competing events that are eligible to occur at the target state to initialize the desired reconfiguration successfully. Thus, we have $\sigma_{r} \in \Sigma_{\textbf{LOC}^{P}} \cap \Sigma_{\textbf{LOC}^{C}}$, i.e., 
	\begin{subequations}
		\begin{empheq}[left={\empheqlbrace\,}]{align}
			&(\exists q_{C} \in Q_{\textbf{LOC}^{P}}) \delta_{\textbf{LOC}^{P}}(q_P, \text{P}_{P}(\pi)\cdot\sigma_r) = q_{C},\\
			&(\exists q_{P} \in Q_{\textbf{LOC}^{C}}) \delta_{\textbf{LOC}^{C}}(q_C, \text{P}_{C}(\pi)\cdot\sigma_r) = q_{P}.
		\end{empheq}
	\end{subequations}
	Therefore, we conclude that
	\begin{subequations}
		\begin{empheq}[left={\empheqlbrace\,}]{align}
			& \delta_{\textbf{LOC}^{P}}(\delta_{\textbf{LOC}^{P}}(q_P, \text{P}_{P}(\pi)),\sigma_r)!,\\
			& \delta_{\textbf{LOC}^{C}}(\delta_{\textbf{LOC}^{C}}(q_C, \text{P}_{C}(\pi)),\sigma_r)!.
		\end{empheq}
	\end{subequations}
	That is, any arbitrary timed forcible path $\pi$ reaches the states of both $\textbf{LOC}^{P}$ and $\textbf{LOC}^{C}$ at which $\sigma_r$ is eligible to occur, and $\mathcal{P}$ overall solves the timed reconfiguration problem with respect to \textbf{TDRS} as well.
\end{proof}
\begin{rem}
	The theorem above in fact demonstrates that our strategy indeed processes both centralized and decentralized reconfiguration problem in the same manner; this point again clarifies the fact that the solutions to the two problems are identical.
	
	We also observe that the timed forcible path set corresponding to \textbf{TDRS} exists in both $\textbf{LOC}^{P}$ and $\textbf{LOC}^{C}$ of \textbf{TDRS}. Specifically, we assert that both $\textbf{LOC}^{P}$ and $\textbf{LOC}^{C}$ belong to the domain space of the $\mathcal{TRS}$ algorithm since $\Sigma_{\textbf{LOC}^{P}}^{*} \subseteq \Sigma_{\textbf{TCRS}}^{*}$ and $\Sigma_{\textbf{LOC}^{C}}^{*} \subseteq \Sigma_{\textbf{TCRS}}^{*}$.
\end{rem}
We note that the timed decentralized reconfiguration problem is also invariant to the order of applying the $\mathcal{TRS}$ algorithm and the tick projection. In particular, the result is the spatial projection of the timed forcible path as stated by the following corollary.
\begin{cor}
	\label{cor:commmD}
	Let $\mathcal{P}$ be the set of strings corresponding to a timed forcible path (set) which solves to a timed decentralized reconfiguration problem associated with a timed decentralized reconfiguration supervisor \textbf{TDRS}. Suppose also source and target states $q_{s},q_{r} \in Q_{\textbf{TDRS}}$ are defined with respect to reconfiguration event $\sigma_r$. Given P$\mathcal{P}$ as the spatial projection of $\mathcal{P}$, the following equality holds.
	\begin{equation}
		\label{eq:projTDRS}
		\begin{split}
			&\textit{project}(\mathcal{TRS}(q_{s},q_{r},L(\textbf{TDRS})))  =  \\ & \mathcal{TRS}(q_{s},q_{r},\textit{project}(L(\textbf{TDRS}))) = \text{P}\mathcal{P}
		\end{split}
	\end{equation}
\end{cor}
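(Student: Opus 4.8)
The plan is to follow the proof of Proposition \ref{prop:commm} essentially verbatim, with the decentralized supervisor \textbf{TDRS} substituted for the centralized supervisor \textbf{TCRS} throughout. The only structural facts I rely on are that \textbf{TDRS} is a TDES over the alphabet $\Sigma_{\textbf{TDRS}} = \Sigma_{\textbf{LOC}^{P}} \cup \Sigma_{\textbf{LOC}^{C}}$ containing $tick$, and that $\mathcal{P}$ genuinely solves the timed reconfiguration problem on \textbf{TDRS}; both are already secured by Theorem \ref{thm:comm2}, in particular by the inclusion $L(\textbf{TCRS}) \subseteq L(\textbf{TDRS})$ established there. I would begin by writing P\textbf{TDRS} for the tick-projected automaton obtained by applying \textit{project} to \textbf{TDRS}, using the shorthand $\text{P}(\cdot)$ for \textit{project} exactly as in the proposition, and reducing the claim to showing that each of the two functional compositions drives P\textbf{TDRS} from its source state to a state at which $\sigma_r$ is eligible.

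For the direction $\textit{project} \circ \mathcal{TRS}$, I would take an arbitrary $\pi \in \mathcal{P}$; since $\pi$ is a timed forcible path for \textbf{TDRS}, Definition \ref{defn:tfp} gives $\pi \cdot \sigma_r \in L(\textbf{TDRS})$. Applying the tick projection and using that it is a monoid homomorphism yields $\text{P}(\pi) \cdot \text{P}(\sigma_r) \in L(\text{P}\textbf{TDRS})$. The one point that must be checked is that $\sigma_r$ survives the projection, i.e. $\text{P}(\sigma_r) = \sigma_r$; this holds because $\sigma_r \in \Sigma_{\text{hib}}$ is a controllable reconfiguration event distinct from $tick$, a fact already used in the proof of Theorem \ref{thm:comm2}. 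Consequently $\text{P}(\pi)\cdot\sigma_r \in L(\text{P}\textbf{TDRS})$, whence $\delta_{\text{P}\textbf{TDRS}}(\delta_{\text{P}\textbf{TDRS}}(q,\text{P}(\pi)),\sigma_r)!$ for the current state $q \in Q_{\text{P}\textbf{TDRS}}$, so $\sigma_r$ is eligible at the reached state. The reverse direction $\mathcal{TRS} \circ \textit{project}$ is handled by the same two language memberships read in the opposite order: first $\text{P}(\pi) \in L(\text{P}\textbf{TDRS})$, then $\text{P}(\pi)\cdot\sigma_r \in L(\text{P}\textbf{TDRS})$, giving the same eligibility conclusion.

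The justification that the two orders produce identical \emph{output}, rather than merely both enabling $\sigma_r$, rests on the remark closing Section \ref{subsubsec:TBFC}: the backtracking process is invariant to the presence or absence of $tick$, since eliminating $tick$ only shrinks $\Sigma_{\textbf{TDRS}}$ without altering the definition of the eligibility set $\Lambda_q$ or the node–link generator of the BFT. I would invoke this to argue that running $\mathcal{TRS}$ on P\textbf{TDRS} reconstructs exactly the tick-free skeleton of the forcible paths, so both compositions return the common spatial projection $\text{P}\mathcal{P}$. I do not anticipate a genuine obstacle, as the entire argument is a transcription of Proposition \ref{prop:commm} enabled by $L(\textbf{TCRS}) \subseteq L(\textbf{TDRS})$. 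The only place demanding care is confirming that the homomorphism property of \textit{project} together with $\sigma_r \neq tick$ legitimately transports the eligibility of $\sigma_r$ across the projection; everything else is formal.
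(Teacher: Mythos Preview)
Your proposal is correct and matches the paper's own proof, which is literally the one-line statement ``The claim is proved by substituting \textbf{TCRS} with \textbf{TDRS} in the proof of Proposition~\ref{prop:commm}.'' You have simply unpacked that substitution in detail, adding the (reasonable but not strictly required) pointers to Theorem~\ref{thm:comm2} and the tick-invariance remark.
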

\begin{proof}
	The claim is proved by substituting \textbf{TCRS} with \textbf{TDRS} in the proof of Proposition \ref{prop:commm}.
\end{proof}
\section{Example}
\label{sec:example}
In this section, we solve a timed reconfiguration variation of SMALL FACTORY, whose untimed version was solved in \cite{macktoobian2017automatic}. We have to compute the timed centralized reconfiguration supervisor associated with a desired timed reconfiguration scenario, so the $\mathcal{TRS}$ algorithm can check the solvability of the problem. The time bounds corresponding to the events of SMALL FACTORY are specified in Table. \ref{tbl:events}.
\begin{table}[htp]
	\setlength\belowcaptionskip{-0.7\baselineskip}
	\centering
	\caption{The timing characteristics of the manufacturing cell's \newline\hspace*{1.1cm} events}
	\label{tbl:events} 
	\begin{tabular}{ccc}
		\toprule
		Event label & Lower bound & Upper bound\\
		\midrule
		11 & 1 & $\infty$\\
		12 & 0 & 3\\
		13 & 1 & $\infty$\\
		20 & 1 & 2\\
		22 & 0 & 4\\
		23 & 1 & $\infty$\\
		30 & 2 & 4\\
		31 & 2 & $\infty$\\
		32 & 2 & 4\\
		33 & 2 & $\infty$\\
		\bottomrule
	\end{tabular}
\end{table}
Additionally, we take $\braket{13}$, $\braket{23}$, $\braket{31}$, and $\braket{33}$ into account as forcible events. The reconfiguration specification \textbf{R} is, also, planned including reconfiguration event $\braket{91}$ which is a remote event with lower bound 2. Then, the reconfigurable model of the plant is obtained by composing all agents of SMALL FACTORY and \textbf{R}  as the following:
\begin{equation}
	\textbf{RMACH} := \uline{\textbf{compose}}(\textbf{M1},\textbf{M2},\textbf{R}).
\end{equation}
Then, the timed transition graph corresponding to \textbf{RMACH}, i.e., \textbf{TRMACH}, is computed.
\begin{equation}
	\begin{split}
		\textbf{TRMACH} &:= \uline{\textbf{timed\_graph}}(\textbf{RMACH}) \\
		\textbf{ALLTRMACH} &:= \uline{\textbf{allevents}}(\textbf{TRMACH})
	\end{split}
\end{equation}
According to the specification described in \cite{macktoobian2018automatic}, we synthesize the \textbf{TSUP} supervisor, which is fully controllable.
\begin{equation}
	\textbf{TSUP} =: \uline{\textbf{supcon}}(\textbf{TRMACH},\textbf{SPEC})
\end{equation}
Assume that \textbf{TSUP} resides at $[\bm{62}]$ and a reconfiguration task is triggered, so a user selects $[\bm{0}]$ at which  $\braket{91}$ is eligible to occur. Thus, we need to check whether or not backtracking from $[\bm{0}]$ to $[\bm{62}]$ is possible considering our recursive backtracking approach. Consequently, the $\mathcal{TRS}$ algorithm backtracks from $[\bm{0}]$ to $[\bm{62}]$ to find all timed forcible paths. The shortest path is $\pi_{1} := \braket{23,33,tick,12,31}$.

We present evidence for the results of Proposition \ref{prop:commm} Specifically, we eliminate ticks of \textbf{TSUP} as follows.
\begin{equation}
	\textbf{PTSUP} = \uline{\textbf{project}}(\textbf{TSUP},Null[0])
\end{equation}
First we observe that $\text{P}(\pi_{1}) = \braket{23,33,12,31}$ reaches a state of P\textbf{TSUP} at which $\braket{91}$ is defined:
\begin{equation}
	\text{P}(\pi_{1}): [\bm{4}] \xrightarrow{\braket{23}} [\bm{7}] \xrightarrow{\braket{33}} [\bm{17}] \xrightarrow{\braket{12}} [\bm{20}] \xrightarrow{\braket{31}} [\bm{0}] \checkmark
\end{equation}
Second, applying timed backtracking forcibility to the timed reconfiguration problem corresponding to P\textbf{TSUP} illustrates that the backtracking from $[\bm{0}]$ to $[\bm{4}]$ is authorized according to $\text{P}(\pi)$. Thus, the $\mathcal{TRS}$ algorithm and \textit{project} can be applied to a timed centralized reconfiguration supervisor in either order to obtain the spatial projections of a particular timed forcible path (set). We also observe that the state size and the number of events of P\textbf{TSUP} are much smaller than the state and event size of \textbf{TSUP}. This implies that the processing of P\textbf{TSUP} is computationally simpler then \textbf{TSUP} for the $\mathcal{TRS}$ algorithm. Thus, the validity of the claim of Remark \ref{rem:efficient} is confirmed in this instance.
\vspace*{4mm}
\section{Conclusion}
\label{sec:conclusion}
Temporal considerations always strengthen the complication of the management of complex systems. TDES theory synchronized by supervisory control theory provides a powerful structure to model and to control real-time complex systems. This research solved the automatic reconfiguration problem of TDES. We found the solutions to the timed reconfiguration problem which are the timed forcible paths triggering intended reconfigurations in reconfiguration supervisors. Interestingly, both centralized and decentralized versions of the timed reconfiguration problem are solved by identical solutions.
\vspace*{4mm}
\section*{Acknowledgement}
This work was supported by the Natural Sciences and Engineering Research Council (NSERC), Grant Number DG-480599, in the period in the course of which the author was with the Systems Control Group, the University of Toronto, ON, Canada. The author also appreciates helpful comments from W. M. Wonham.
\vspace*{4mm}
\bibliographystyle{IEEEtran}
\bibliography{references}{}
\end{document}